\keywords{algebraic language theory, infinite tree, $\omega$-clone}
\def\imagetop#1{\vtop{\null\hbox{#1}}}
\begin{document}

\title[A non-regular language of infinite trees]{
	A non-regular language of infinite trees \\ that is recognized by a sort-wise finite algebra
}
\titlecomment{Research supported by the European Research Council (ERC) under the European Union's Horizon 2020 research and innovation programme (ERC consolidator grant LIPA, agreement no. 683080).}

\author{Miko\l{}aj Boja\'nczyk}
\address{University of Warsaw}
\email{bojan@mimuw.edu.pl}

\author{Bartek Klin}
\address{University of Warsaw}
\email{klin@mimuw.edu.pl}

\begin{abstract}
$\omega$-clones are multi-sorted structures that naturally emerge as algebras for infinite trees, just as $\omega$-semigroups are convenient algebras for infinite words. In the algebraic theory of languages, one hopes that a language is regular if and only if it is recognized by an algebra that is finite in some simple sense. We show that, for infinite trees, the situation is not so simple: there exists an $\omega$-clone that is finite on every sort and finitely generated, but recognizes a non-regular language.
\end{abstract}

\maketitle

\newcommand{\trans}{\mathsf{prof}}
\newcommand{\M}{\mathsf M}
\newcommand{\clone}{\mathsf C}
\newcommand{\clonereg}{\mathsf C^{\mathrm{reg}}}
\newcommand{\rank}[1]{\mathrm{rank}(#1)}
\newcommand{\diva}{{\sc{(da)}}}
\newcommand{\homf}{{\sc{(hf)}}}
\newcommand{\fing}{{\sc{(fg)}}}
\newcommand{\eqdef}{\stackrel{\text{\tiny def}}=}
\newcommand{\trees}{{\mathsf{trees}}}
\newcommand{\terms}[1]{{\mathsf T}\!_{#1}}
\newcommand{\powerset}{{\mathsf P}}
\newcommand{\powersetfin}{{\powerset_{\textrm{fin}}}}
\newcommand{\muddles}{{\mathsf M}}
\newcommand{\unit}{\mathsf{unit}}
\newcommand{\flatt}{\mathsf{flat}}
\newcommand{\aalg}{\mathbf{A}}
\newcommand{\balg}{\mathbf{B}}
\newcommand{\Nat}{\mathbb N}
\newcommand{\hsp}{{\sc hsp}\xspace}
\newcommand{\mso}{{\sc mso}\xspace}
\newcommand{\hs}{{\sc hs \xspace}}
\newcommand{\aut}{\mathcal A}
\newcommand{\nodes}{\mathsf{nodes}}
\newcommand{\set}[1]{\{#1\}}
\newcommand{\dom}{\mathsf{dom}}
\newcommand{\game}{\mathsf G}
\newcommand{\mult}{\mathsf{pr}}
\newcommand{\parfun}{\rightharpoonup}
\newcommand{\monad}{\mathsf M}
\newcommand{\facto}{\mathsf F}
\newcommand{\poly}[2]{\mathsf{pol}_{#1}#2}

\newcounter{ourexamplecounter}
\newenvironment{myexample}{
\medskip

\refstepcounter{ourexamplecounter}
\smallskip\noindent{\textbf{{Example \arabic{ourexamplecounter}. }}}}{
$\Box$ \smallskip 
}

\newcounter{runningcounter}
\newenvironment{running}{
\medskip

\refstepcounter{runningcounter}
\smallskip\noindent{\textbf{{Running Example \arabic{runningcounter}. }}}}{
$\Box$ \smallskip 
}


\section{Introduction}


The central theme in the algebraic theory of languages is that regular languages are exactly those which are recognized by finite algebras. Here, ``regular'' means definable in monadic second order logic (\mso), or recognizable by a suitable kind of finite automata. This theme occurs for many kinds of objects.
Important examples are: finite words (here the algebras are semigroups), $\omega$-words (here the algebras are $\omega$-semigroups, see~\cite{perrin_pin_words}), scattered countable linear orders (here the algebras are $\Diamond$-semigroups, see~\cite{DBLP:journals/ijfcs/RispalC05}) or unrestricted countable linear orders (here the algebras are $\circ$-semigroups, see~\cite{shelah_composition,DBLP:conf/icalp/CartonCP11}). 

In all of these cases the languages recognized by algebras with a finite universe are exactly those that can be defined in \mso. This fact is remarkable especially for the infinite structures, because then the multiplication operation of an algebra is usually infinitary and hence it may be an infinite object, even when the universe of the algebra is finite. The implication ``if a language is recognized by a finite algebra then it is definable in \mso'' uses a subtle interplay between the finiteness of the universe and the associativity of a multiplication operation in an algebra, which can be exploited together with regularity results like Ramsey's Theorem (used implicitly already by B\"uchi and more explicitly in~\cite{DBLP:conf/icalp/Wilke91,perrin_pin_words} for $\omega$-words), Hausdorff's theorem on scattered linear orders used in~\cite{DBLP:journals/ijfcs/RispalC05}, or Simon's Factorisation Forest Theorem used in~\cite{DBLP:conf/icalp/CartonCP11}. The equivalence of recognizability by finite algebras and \mso definability carries over to other finite structures, such as finite ranked trees~\cite{ThaWri68} or finite graphs of bounded treewidth~\cite{bojanczyk2016definability}.

A missing piece of this puzzle is the equivalence of algebraic recognizability and \mso for infinite trees. The hard part is proving that  if a language of infinite trees is recognized by a finite algebra, then it is definable in \mso. A subproblem is defining what a ``finite algebra'' actually is. 

So far there have been three approaches to this question~\cite{DBLP:conf/concur/BojanczykI09,DBLP:journals/tcs/Blumensath13a,DBLP:journals/mst/IdziaszekSB16}, each one only partially successful. 

The solution proposed in~\cite{DBLP:conf/concur/BojanczykI09} is a two-sorted algebra that intuitively represents regular trees (i.e. ones that have only finitely many non-isomorphic subtrees) with zero or one free variable. The chosen notion of finiteness is that the universe is finite on each of the two sorts, and the multiplication operation is \mso-definable. This approach has two disadvantages: (a) a design choice of the algebra is that it only represents regular trees and not arbitrary infinite trees; and (b) the definition of a ``finite algebra''  uses the \mso logic, and hence a correspondence of such algebras with \mso is not  so surprising. 

Another algebra for infinite trees was proposed in~\cite{DBLP:journals/mst/IdziaszekSB16}. There the notion of a finite algebra used purely structural conditions, but it modeled only trees with countably many branches. Such trees cannot feature recursive branching, and they do not capture the full complexity of truly infinite trees.

Yet another approach was proposed by Blumensath in~\cite{DBLP:journals/tcs/Blumensath13a}. Here the algebras are $\omega$-hyperclones, which represent arbitrary infinite trees (with variables) and not just regular ones, thus overcoming  disadvantage (a). Blumensath's notion of ``finite algebra'' is an $\omega$-hyperclone which is finite on every sort (these are infinitely sorted algebraic structures) and which satisfies an additional condition called \emph{path continuity}. A disadvantage of this approach is that path-continuous $\omega$-hyperclones are not closed under homomorphic images, and in particular minimising an algebra can lead outside the class. Furthermore, the notion of path continuity has a similar, if less flagrant, issue  as (b), in the sense that it can be seen as imposing an implicit automaton structure on the algebra.  

In this paper we study $\omega$-clones, which are essentially the same thing as the $\omega$-hyperclones used in~\cite{DBLP:journals/tcs/Blumensath13a}. The difference is cosmetic: we represent single trees instead of tuples of trees. An $\omega$-clone is an infinitely sorted algebraic structure, with sorts indexed by natural numbers:  the $n$-th sort represents possibly infinite trees with $n$ variables, with each variable appearing possibly multiple times (even infinitely often).  

Since $\omega$-clones generalize $\omega$-semigroups in the same way as infinite trees generalize $\omega$-words, they are a natural candidate for an algebra of infinite trees. The question is: what is the right definition of  ``regular'' $\omega$-clone that would correspond to \mso for infinite trees? One would want such a definition to use only structural properties such as finiteness, and not to mention \mso in any way. Ideally, ``regular'' $\omega$-clones should also be closed under basic algebraic operations such as taking homomorphic images, products and finitely generated subalgebras.

In this paper we refute the most natural candidate for such a structural characterisation of regular $\omega$-clones: we construct an $\omega$-clone which is finitely generated, finite on every sort, but it recognizes a tree language that is not regular. This shows that, alone, finiteness  of the universe is insufficient for regularity (unlike in all the algebras for infinite words, or for finite trees). Although this might seem unsurprising (given the infinitely many sorts), simple ideas for a counterexample fail to work and our construction turns out to be rather delicate. This is in contrast with the setting of (finite) graphs, where such counterexamples are easy to find~\cite{courcelle91}.
 

\section{Infinite trees}


A \emph{ranked set} is a set where every element has an associated rank, which is a natural number. For a ranked set $\Sigma$, the set of elements of rank $n$ is denoted $\Sigma_n$. 
A \emph{tree over $\Sigma$} is a tree, possibly with infinite branches or leaves or both, where every node is labelled by an element of $\Sigma$. The number of children for a node must be equal to the rank of its label, and we assume that children are ordered, so that we can speak of the first child, the second child, etc. Here is a picture of a ranked alphabet (with two letters of rank 2 and one of rank 0) and a tree over it.
\begin{center}
	\begin{tabular}{cc}
	a ranked alphabet & a tree over it\\	\imagetop{\includegraphics[page=1,scale=0.2]{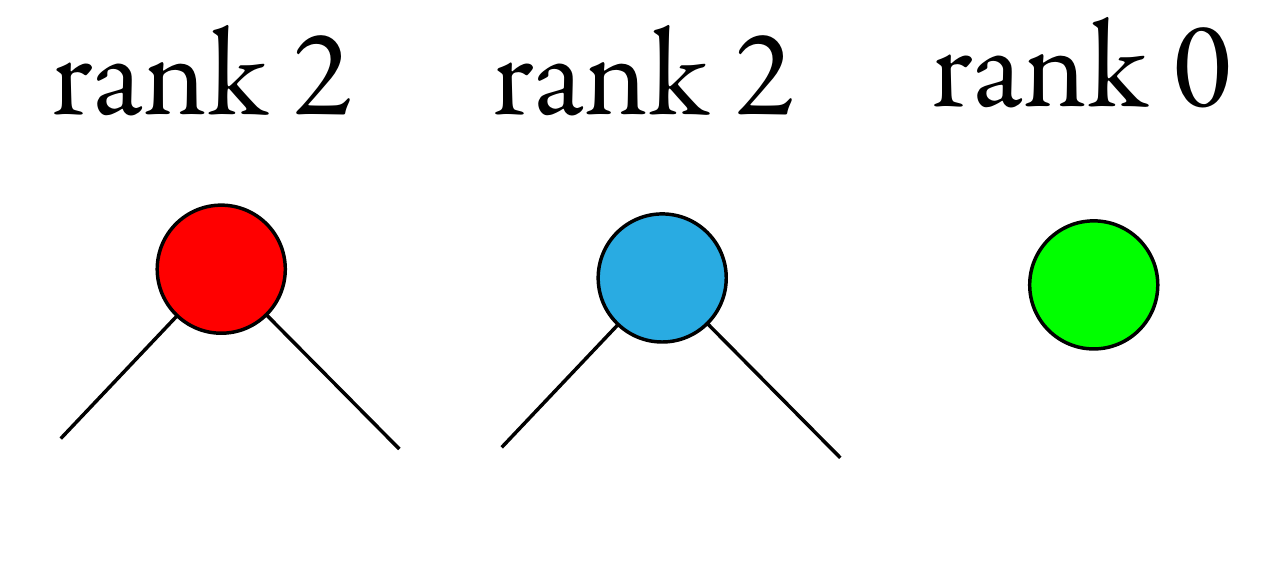}} & 	\imagetop{\includegraphics[page=2,scale=0.2]{pics}}
\end{tabular}
\end{center}
We use standard tree terminology like root, child, parent, leaf, descendant and ancestor. A \emph{branch} in a tree is a maximal (inclusionwise) set of nodes that is totally ordered by the descendant relation. A branch might end in a leaf or be infinite. A \emph{tree language} over a ranked alphabet $\Sigma$ is a set of trees over this alphabet. 

The main focus of this paper is \emph{regular} tree languages, i.e.~those tree languages that can be recognized by automata (equivalently, are \mso definable). The automata we use are nondeterministic parity automata, as defined below.
\begin{defi}\label{def:npa}
	A nondeterministic parity tree automaton consists of:
	\begin{itemize}
		\item an \emph{input alphabet}, which is a finite ranked set $\Sigma$; 
		\item a finite set of \emph{states} $Q$ with a chosen \emph{initial state} $q_0 \in Q$, 
		\item a function that assigns to each state in $Q$ a natural number called its {\em parity value},
		\item for every letter $\sigma \in \Sigma$, a \emph{transition relation}
		\begin{align*}
			\delta_\sigma \subseteq Q \times Q^{\rank \sigma}.
		\end{align*}
	\end{itemize}
\end{defi}
\noindent Since this is the only kind of automata that we use, we will simply call them automata.

A \emph{run} of an automaton over a tree $t$ over the input alphabet $\Sigma$ is a labelling of its nodes by states such that for every node $v$ with label $\sigma \in \Sigma$, say of rank $n$, the transition relation $\delta_\sigma$ contains the tuple consisting of the state in $v$ and the states in the children of $v$, listed from left to right. A run is said to satisfy the {\em parity condition} if on every infinite branch, the maximal parity value seen infinitely often is even. A tree is accepted if there is a run on it which has the initial state in the root and which satisfies the parity condition. The language \emph{recognized} by an automaton is defined to be the set of all trees that it accepts. A tree language is called \emph{regular} if it is recognized by some automaton.

A single infinite tree can also be called {\em regular}: it is when it contains only finitely many non-isomorphic subtrees. Equivalently, a tree is regular if it arises as an unfolding of a finite $\Sigma$-labeled graph, in an obvious sense. It is a standard result due to Rabin (see e.g.~\cite[Thm. 6.18]{Thomas1997}) that a non-empty regular tree language necessarily contains a regular tree. 

Another famous theorem of Rabin is that regular tree languages are exactly those that can be defined in monadic second-order logic (\mso{}), see e.g.~\cite{Thomas1997} for details.

\section{$\omega$-clones}


For a ranked set $\Sigma$, a \emph{term over $\Sigma$ with $n$ ports} is a tree over the disjoint union set $\Sigma + [n]$, where numbers from $[n]=\{1,\ldots,n\}$, called here {\em ports}, are viewed as symbols of rank zero (and can therefore only appear as labels of leaves). We require every port from $1$ to $n$ to appear at least once in such a term, although this is mainly for presentation reasons: it is not difficult to transport our results to a setting where this requirement is not made. Terms may be infinite, and a port may appear infinitely many times in a single term. 

Intuitively, ports in a term are locations where other trees can be grafted to form a larger tree. It is therefore natural to consider a term with $n$ ports as an element of rank $n$ in a ranked alphabet of terms. 

More formally, for a ranked set $\Sigma$, define a ranked set $\clone\Sigma$ so that elements of rank $n$ are terms over $\Sigma$ with $n$ ports, {\em excluding} (for reasons that will become apparent in a moment) the ``trivial'' tree of rank $1$ whose root is labelled with port $1$.

This construction is equipped with the following structure:

\begin{itemize}
	\item  \emph{Action on functions.} A   rank-preserving function $f : \Sigma \to \Gamma$ between ranked sets is lifted to a rank-preserving function $ 
  \clone f : \clone \Sigma \to \clone \Gamma$ by node-wise application of $f$ preserving ports, as in the following example picture:
  
  \begin{center}
	\begin{tabular}{cc}
	$f : \Sigma \to \Gamma$ & $\clone f$ applied to a rank-2 element of $\clone \Sigma$\\	\imagetop{\includegraphics[page=3,scale=0.2]{pics.pdf}} & 	\imagetop{\includegraphics[page=4,scale=0.2]{pics}}
\end{tabular}
\end{center}
\item {\it Unit.} A letter $\sigma \in \Sigma$ of rank $n$ is viewed as a term with $n$ ports which has $\sigma$ in the root and ports in its children, as in the following picture:

 \begin{center}
	\begin{tabular}{cc}
	a rank-3 letter in $\Sigma$ & and its rank-3 unit in $\clone \Sigma$\\	\imagetop{\includegraphics[page=5,scale=0.2]{pics.pdf}} & 	\imagetop{\includegraphics[page=6,scale=0.2]{pics}}
\end{tabular}
\end{center}
This, for any $\Sigma$, defines a rank-preserving unit function $\eta_{\Sigma}:\Sigma\to\clone\Sigma$. 
\item {\it Flattening.} The (rank-preserving) flattening function   $\mu_\Sigma:\clone (\clone \Sigma) \to \clone \Sigma$ is defined as follows.
Suppose that $t \in \clone (\clone \Sigma)$ has rank $n$. Let the root label of $t$ be $s$ (itself a term in $\clone\Sigma$ of rank, say, $k$), and let $t_1,\ldots,t_k$ be the immediate subtrees of the root in $t$, seen as terms in $\clone (\clone \Sigma)$ with rank $n$ each (here, for a brief moment, we allow trees where some ports from $[n]$ may not appear). Then the flattening of $t$ is the term obtained from $s$ by substituting for (each occurrence of) the $i$-th port the flattening of $t_i$, defined recursively, with the special case of $t_i$ being a port (in $t$), which is flattened to itself. (Note that in the flattening of $t$, each of the $n$ ports appears again.) The following picture should make the idea clear: 
 \begin{center}
	\begin{tabular}{cc}
	an element of $\clone (\clone \Sigma)$, of rank 2 & its flattening in $\clone \Sigma$, also of rank 2\\	\imagetop{\includegraphics[page=7,scale=0.2]{pics.pdf}} & 	\imagetop{\includegraphics[page=8,scale=0.2]{pics}}
\end{tabular}
\end{center}
\end{itemize}

\begin{rem}\label{rem:flat}
For a more formal definition it is useful think of nodes in a tree in terms of the unique (finite) branches from the root to those nodes. For a term $t\in\clone(\clone\Sigma)$, a node in the flattening $\mu_{\Sigma}(t)$ is uniquely determined by the following data:
\begin{itemize}
\item a finite branch $(v_1,v_2,v_3,\ldots v_k)$ in $t$ (let $s_i$ denote the term in $\clone\Sigma$ that is the label of $v_i$; for $i=k$ this is defined only if $v_k$ is not a port in $t$),
\item a sequence $(w_1,\ldots,w_{k-1})$, where each $w_i$ is an occurrence in $s_i$ of the port $j$ such that $v_{i+1}$ is the $j$'th child of $v_i$ in $t$, and
\item a node $w_k$ (not a port) in $s_k$ if $v_k$ is not a port; the resulting node in $\mu_{\Sigma}(t)$ is then labelled with the label of $w_k$.
\end{itemize}
If $v_k$ is a port in $t$, then the resulting node in $\mu_{\Sigma}(t)$ is also a port, with the same number.
\end{rem}

\begin{rem}
The trivial term whose root is labelled with port $1$ rather than with an element of $\Sigma$ is excluded from $\clone\Sigma$ because it does not agree well with infinite paths. To see this, for any $\Sigma$, consider a tree which is an infinite path (of rank $0$) whose every node is labelled with the trivial term. One would be in trouble picking a candidate of rank $0$ in $\clone\Sigma$ for the flattening of that infinite path. The same issue arises already in the theory of infinite words, and is the reason why semigroups have more importance than monoids in that theory.
\end{rem}

\begin{rem}
The construction $\clone$ together with its extension to functions, unit, and flattening operations, forms a {\em monad} on the category of ranked sets and rank-preserving functions. Briefly, the unit and flattening operations are natural transformations, flattening is associative, and unit is a two-sided unit for flattening. To keep the exposition elementary we avoid category-theoretic terminology here, but categorical considerations do provide a strong justification for the notion of $\omega$-clones (see below), which are simply Eilenberg-Moore algebras for the monad $\clone$ (see e.g.~\cite[Chap.~VI]{maclane} for the relevant definitions).
\end{rem}

Trees are a natural generalization of words, and the construction $\clone$ is a generalization of the construction of potentially infinite words out of an alphabet. In the same fashion, the following is a generalization of the notion of $\omega$-semigroup.

\begin{defi}
An {\em $\omega$-clone} consists of:
\begin{itemize}
\item a ranked set $A$ equipped with
\item a rank-preserving function $\mult^A:\clone A\to A$ (the {\em product operation}),
\end{itemize}
such that:
\begin{enumerate}
\item for every $a\in A$, the product of the unit of $a$ is $a$ itself:
\[
	\mult^A(\eta_A(a)) = a,
\]
\item for every $t\in \clone(\clone A)$, applying the product operation to the flattening of $t$ yields the same value in $A$ as applying it to the term obtained from $t$ by node-wise application of the product operation:
\[
	\mult^A(\mu_A(t))=\mult^A(\clone\,\mult^A(t)).
\]
\end{enumerate}
\end{defi}
\noindent Intuitively, $\mult^A$ describes a way folding terms labeled with elements of $A$ into single elements of $A$. We call it the {\em product} following the standard terminology for a corresponding operation in $\omega$-semigroups (see e.g.~\cite{perrin_pin_words}).

The first axiom of $\omega$-clones says that products of unit terms are computed trivially; 
the second axiom says that taking the product is compositional with respect to flattening of composite terms.

A {\em homomorphism} from an $\omega$-clone $A$ to $B$ is a rank-preserving function $h:A\to B$ that commutes with the respective product operations in the expected sense:
\[
	h(\mult^A(t)) = \mult^B(\clone h(t)) \quad \mbox{for every }t\in\clone A.
\]

It easily follows from the definition that for any ranked set $\Sigma$, the set $\clone \Sigma$ is an $\omega$-clone, with the flattening taken as the product operation. Indeed, it is the {\em free $\omega$-clone} over $\Sigma$: for any $\omega$-clone $A$, every rank-preserving function from $\Sigma$ to $A$ extends uniquely to an $\omega$-clone homomorphism from $\clone \Sigma$ to $A$.

A tree language $L$ over a ranked alphabet $\Sigma$ can be seen as a subset of $\clone\Sigma$ that only contains terms of rank $0$. We say that an $\omega$-clone $A$ {\em recognizes} $L$ if there is a homomorphism $h:\clone \Sigma\to A$, and a subset $B\subseteq A$, such that the inverse image of $B$ along $f$ is exactly $L$. 

Inspired by similar results regarding regular languages of words, infinite words or finite trees (see Introduction), one would ideally want that a tree language over a finite alphabet is regular if and only if it is recognized by a finite $\omega$-clone. This turns out to be false very quickly. Indeed, if a ranked set $A$ has some element of rank more than $1$, then the set $\clone A$ has elements of arbitrary finite rank. Since the product operation in an $\omega$-clone must be rank-preserving, for $A$ to be an $\omega$-clone it must be non-empty on every rank, so it cannot be finite. As a result, all finite $\omega$-clones have only elements of rank at most $1$, so they are essentially $\omega$-semigroups and they cannot recognize anything beyond languages of infinite words. 

One must therefore relax the finiteness restriction on $\omega$-clones to hope for a correspondence to regular tree languages. A natural idea is to require a clone to be {\em rank-wise finite} (i.e. finite on every rank), and finitely generated. An $\omega$-clone $A$ is {\em finitely generated} if there is a finite subset $G\subseteq A$ such that every element of $A$ can be obtained as the product of some term in $\clone G$. 

With this relaxed definition, one direction of the desired correspondence holds:

\begin{thm}\label{thm:regular-algebra-languages}
Every regular tree language over a finite alphabet is recognized by a rank-wise finite, finitely generated $\omega$-clone.
\end{thm}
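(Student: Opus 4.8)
The plan is to build the recognizing $\omega$-clone out of a fixed automaton $\aut$ for $L$ (one exists since $L$ is regular), taking as elements the \emph{acceptance profiles} of terms. Write $Q$ for the states of $\aut$, $q_0$ for its initial state, and let $P\subseteq\Nat$ be the finite set of parity values it uses. Since ports are leaves, every infinite branch of a term $t$ of rank $n$ stays inside $t$; call a run of $\aut$ on $t$ \emph{internally accepting} if the parity condition holds on all such branches. Each run assigns a state to every port leaf and determines the largest parity value along the path from the root to that leaf. For a term $t$ of rank $n$ I would define its profile
\[
	\trans(t)\ \subseteq\ Q\times\powerset(Q\times P)^n
\]
by putting $(q,S_1,\dots,S_n)$ into $\trans(t)$ exactly when $\aut$ has an internally accepting run on $t$ with root state $q$ in which, for each port $i$, the set of (state, maximal-parity-to-the-port) pairs occurring at the occurrences of port $i$ is exactly $S_i$. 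Because $Q$ and $P$ are finite, there are only finitely many profiles of each rank $n$, so the set $A$ of all profiles is rank-wise finite. On rank $0$ the profile is just the set of states from which $t$ has an accepting run, so taking $B$ to be the profiles containing $q_0$ gives $\trans^{-1}(B)=L$.

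Next I would equip $A$ with a product operation that mirrors substitution into ports: to compose a profile at the root with the profiles plugged into its ports, one matches each demanded port state against an admissible root state of the substituted profile and takes the maximum of the two parity values recorded along the now-joined path. The heart of the argument is to verify that $\trans$ is an $\omega$-clone homomorphism, i.e.
\[
	\trans(\mu_\Sigma(t))=\mult^A(\clone\,\trans(t))\qquad\text{for all }t\in\clone(\clone\Sigma),
\]
which says that the profile of a flattening is already determined by the profiles of its components. Once this holds, the $\omega$-clone axioms for $A$ are inherited from the free clone $\clone\Sigma$ via the surjection $\trans$, and finite generation is automatic: every term satisfies $t=\mu_\Sigma(\hat t)$ for some $\hat t\in\clone(\clone\Sigma)$ built only from the units $\eta_\Sigma(\sigma)$, so $\trans(t)=\mult^A(\clone\,\trans(\hat t))$ exhibits $\trans(t)$ as the product of a term over the finite set $\{\trans(\eta_\Sigma(\sigma)):\sigma\in\Sigma\}$. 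Taking $A$ to be the image of $\trans$ then yields a rank-wise finite, finitely generated $\omega$-clone recognizing $L$.

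The step I expect to be the main obstacle is the homomorphism identity, and specifically its parity content. Flattening a term $t\in\clone(\clone\Sigma)$ can create brand-new infinite branches that descend through infinitely many component terms, and the parity condition on such a branch is not witnessed by any single component. The tool I would use is the observation that the largest parity seen infinitely often along such a descent equals the $\limsup$ of the per-component maximal-parity values that the profiles already record, so the finite profile data is in principle sufficient to evaluate acceptance of the new branches. Turning this into a proof requires two coupled arguments: a \emph{synthesis} direction, where independent component runs must be stitched into one global run whose every infinite descent has even $\limsup$ (here I expect to need a coherent, simultaneous choice of component runs across the whole infinitely-branching nesting, most likely via positional determinacy of parity conditions or a K\"onig/Ramsey-type argument), and an \emph{analysis} direction, where a given accepting run is cut along the component boundaries and shown to induce matching profiles. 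If the per-port sets turn out to lose information needed for the synthesis direction, the profile would have to be refined accordingly, but the overall scheme --- finite profiles, substitution as product, and $\limsup$ for infinite descents --- is what I would pursue.
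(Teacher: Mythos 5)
Your proposal follows essentially the same route as the paper's own proof sketch: elements are sets of profiles of parity-satisfying runs (root state plus, per port, the set of state/maximal-parity pairs), the product mirrors substitution, recognition is via membership of the initial state in rank-$0$ profiles, and the crucial well-definedness of the product on flattenings is secured by memoryless (positional) determinacy of parity games, exactly the tool the paper invokes. Your additional observations --- the $\limsup$ characterisation of parities along branches crossing infinitely many components, and finite generation via the images of the units $\eta_\Sigma(\sigma)$ --- are correct refinements of the same construction rather than a different approach.
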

\begin{proof}[Proof (sketch)]
The $\omega$-clone to recognize a regular tree language is built of ``profiles'' of automata runs. Consider runs of an automaton $\mathcal{A}$ over input alphabet $\Sigma$ on a $\Sigma$-tree $t$ with $n$ ports. Restrict attention to those runs that satisfy the parity condition, i.e. those where on every infinite branch the maximal parity value seen infinitely often is even. The {\em profile} of such a run consists of:
\begin{itemize}
\item the initial state, i.e., the state at the root of $t$,
\item for each (occurrence of a) port $i\in[n]$ in $t$, a triple $(q',m,i)$, where $q'$ is the state at that occurrence (which is a leaf in $t$), and $m$ is the maximal parity value at the (finite) branch leading to that leaf.
\end{itemize}
For each automaton $\mathcal{A}$, there are finitely many possible profiles of every rank $n$. The set of sets of those profiles can be equipped with an $\omega$-clone structure (memoryless determinacy of parity games is used to ensure that it is indeed an $\omega$-clone). This $\omega$-clone recognizes the language accepted by $\mathcal{A}$.
\end{proof}

A full proof of Theorem~\ref{thm:regular-algebra-languages} is rather technical. We omit the details, however, because our main message in this paper is that the converse implication fails. The remainder of this paper is devoted to defining a tree language, over a finite alphabet, that is not regular but is nevertheless recognized by a rank-wise finite, finitely generated $\omega$-clone.


\section{Densely antiregular trees}


Fix a ranked alphabet $\set{a,b}$ with two letters, both of them of rank 2. Since this alphabet has no letters of rank 0, every tree (without ports) over it is a full binary tree. Call such a tree \emph{antiregular} if every two different nodes have different subtrees. 

\begin{lem}\label{lem:exist-antiregular}
Antiregular trees exist. 
\end{lem}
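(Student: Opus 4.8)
The plan is to identify the nodes of a full binary tree with the set of finite words $\{0,1\}^*$ over a two-letter alphabet (the empty word being the root, and $u0,u1$ the two children of $u$), and to identify a tree over $\set{a,b}$ with its labelling function $\lambda : \{0,1\}^* \to \set{a,b}$. Under this identification, the subtree rooted at a node $u$ is described by the function $w \mapsto \lambda(uw)$, and two distinct nodes $u \neq v$ carry isomorphic subtrees exactly when $\lambda(uw) = \lambda(vw)$ for every $w \in \{0,1\}^*$. Thus a tree is antiregular if and only if for every pair of distinct nodes $u \neq v$ there is some $w$ with $\lambda(uw) \neq \lambda(vw)$. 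I would then construct such a $\lambda$ by a diagonalisation that handles the countably many pairs of distinct nodes one at a time.

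Concretely, first I would fix an enumeration $(u_0,v_0),(u_1,v_1),\ldots$ of all ordered pairs of distinct nodes, and build $\lambda$ as the limit of an increasing chain of \emph{finite} partial labellings $\lambda_0 \subseteq \lambda_1 \subseteq \cdots$. At stage $i$, given $\lambda_i$, I want to extend it so that $u_i$ and $v_i$ become distinguished. The key observation is that distinguishing a pair requires committing only a single fresh witness: since $u_i \neq v_i$, the words $u_i w$ and $v_i w$ are distinct for every $w$, and I only need one $w$ for which both $u_i w$ and $v_i w$ are still unlabelled. Because $\lambda_i$ constrains only finitely many nodes while there are infinitely many candidate witnesses $w$, such a $w$ always exists; setting $\lambda_{i+1}(u_i w)=a$ and $\lambda_{i+1}(v_i w)=b$ (and leaving all other values as in $\lambda_i$) yields a finite extension that \emph{permanently} distinguishes the pair. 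After all stages, I extend the union $\bigcup_i \lambda_i$ to a total labelling by assigning, say, $a$ to every node not yet labelled.

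The resulting $\lambda$ is antiregular: each pair $(u_i,v_i)$ is distinguished by the commitment made at stage $i$, which is never overwritten, so no two distinct nodes share a subtree. The step I expect to require the most care is the \emph{availability of a fresh witness}: one must check that at every stage only finitely many nodes are constrained, and that among the infinitely many words $w$ at most finitely many can make $u_i w$ or $v_i w$ collide with an already-labelled node (indeed, for a fixed labelled node $x$ there is at most one $w$ with $u_i w = x$ and at most one with $v_i w = x$, since $w$ is the unique suffix after the prefix $u_i$, resp. $v_i$). This bookkeeping is routine but is the heart of the argument.

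I could alternatively phrase the same idea topologically: labellings form a Cantor space, for each pair of distinct nodes the set of labellings distinguishing them is open and dense (open because a single witness suffices, dense by the fresh-witness argument above), and the Baire category theorem then provides a labelling lying in the countable intersection of these dense open sets, which is exactly an antiregular tree.
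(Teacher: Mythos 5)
Your proof is correct, but it takes a genuinely different route from the paper. The paper gives an \emph{explicit} example: it identifies a tree over $\{a,b\}$ with the language $L \subseteq \{0,1\}^*$ of its $a$-labelled nodes, observes that antiregularity of the tree is exactly the statement that the Myhill--Nerode equivalence of $L$ has only singleton classes, and then points to a concrete language with that property, namely the palindromes. You instead run a diagonalisation (equivalently, a Baire category argument) over the countably many pairs of distinct nodes, committing one fresh witness per pair; your fresh-witness bookkeeping is sound, since $u \neq v$ implies $uw \neq vw$ for every $w$, and a finite partial labelling can block only finitely many of the infinitely many candidate witnesses. The trade-off: the paper's argument is shorter and produces a concrete, describable tree, but it silently delegates a nontrivial verification (that the palindrome language really separates \emph{every} pair of distinct words, including words of different lengths) to the reader. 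Your argument is fully self-contained and proves something slightly stronger for free --- the set of antiregular labellings is comeager in the Cantor space of all labellings, so antiregular trees are not only existent but ``generic'' and of cardinality continuum --- at the cost of being non-constructive and yielding no explicit example.
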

\begin{proof}
A node in the full binary tree can be viewed as a word in $\set{0,1}^*$, with $0$ indicating a left turn and $1$ indicating a right turn. A tree over the alphabet $\set{a,b}$ can be viewed as a  subset of $\set{0,1}^*$, i.e.~a language of finite words, which contains those words which correspond to nodes labeled with $a$. It is not difficult to show that a tree is antiregular if and only if in the corresponding language of finite words, the Myhill-Nerode equivalence relation has only singleton equivalence classes. Languages with the latter property exist, e.g.~the language of palindromes.
\end{proof}

The set of antiregular trees itself is not recognized by any rank-wise finite $\omega$-clone. Intuitively, in order to determine whether a tree is antiregular, assuming that both subtrees $t_0,t_1$ of the root are antiregular, one needs to store an infinite amount of information, namely all subtrees of  $t_0$ and all subtrees of $t_1$. We shall therefore relax the antiregularity condition now.

A set $V$ of nodes in a tree is called \emph{dense} if every node of the tree has a descendant (not necessarily proper) in $V$.  Call a tree over $\set{a,b}$ \emph{densely antiregular} if the set $\set{v : \mbox{$v$ is a node whose subtree is antiregular}}$ is dense.

\begin{lem}\label{lem:densely-antiregular-is-not-regular}
The language of densely antiregular trees is not regular.
\end{lem}
\begin{proof}
Every antiregular tree is densely antiregular, so by Lemma~\ref{lem:exist-antiregular} the language of densely antiregular trees is not empty. Furthermore, a regular tree cannot be densely antiregular. The lemma follows from the the fact that every non-empty, regular tree language contains a regular tree (see e.g.~\cite[Thm. 6.18]{Thomas1997}). 
\end{proof}

We will show that the set of densely antiregular trees is recognized by an $\omega$-clone $A$ that is finite on every rank and finitely generated. To define $A$, we use a classification of terms into four kinds. Consider  a term in $ \clone \set{a,b}$ of rank $n$. Such a term is a tree which uses labels $a,b$ in non-leaf nodes, and ports $1,\ldots,n$ in leaves. A term can be of one of four mutually exclusive kinds:
  \begin{itemize}
  	\item[{\bf 1}:] some subtree has no ports and is not densely antiregular;
  	\item[{\bf 2}:] some subtree has no ports and every subtree without ports is densely antiregular;
  	\item[{\bf 3}:] every subtree has a port and some port name is used at least twice;
  	\item[{\bf 4}:] every subtree has a port and every port name is used exactly once.
  \end{itemize}
  
For illustration, here are examples of terms of the four kinds:  
  
 \begin{center}
	\begin{tabular}{cc}
	kind {\bf 1}: & kind {\bf 2}:\\	
\imagetop{\includegraphics[page=15,scale=0.2]{pics.pdf}} & 	\imagetop{\includegraphics[page=16,scale=0.2]{pics}} \\ \\
	kind {\bf 3}: & kind {\bf 4}:\\	
\imagetop{\includegraphics[page=17,scale=0.2]{pics.pdf}} & 	\imagetop{\includegraphics[page=18,scale=0.2]{pics}}
\end{tabular}
\end{center}  
  
\noindent Note that a term of kind {\bf 4} must be finite and its number of leaves is equal to its rank; as a consequence, for every rank there are only finitely many terms of kind {\bf 4}.

Define a ranked set $A$ so that its element of rank $n$ is:
\begin{itemize} 
\item a term of kind {\bf 4} of rank $n$, or 
\item a kind name ({\bf 1}, {\bf 2} or {\bf 3}) together with the number $n$ (the rank); kind name {\bf 3} is included only if $n$ is positive.
\end{itemize}
By the above remark, $A$ is finite on every rank. 

There is an easy (surjective) rank-preserving function $h:\clone\set{a,b}\to A$ that maps terms of kind ${\bf 4}$ identically, and maps every other term to its kind and rank. Note that all trees of rank $0$ are of kind {\bf 1} or {\bf 2}. The function $h$ maps densely antiregular trees to {\bf 2}, and all the other ones to {\bf 1}. In particular, $h$ recognizes the set of densely antiregular trees. 

We will build an $\omega$-clone structure on the set $A$ that will make $h$ a homomorphism. To this end we will show that the value of $h$ on the flattening of a term $t\in\clone(\clone\{a,b\})$ can be derived just from the external structure of $t$ and from the values of $h$ on its nodes.

Here and in the following, $t|_v$ denotes the subtree of $t$ rooted at a node $v$. 

\begin{lem}\label{lem:flattening-subtree}
For every subtree $s$ of the flattening of $t\in\clone(\clone\{a,b\})$, one of the following conditions holds:
\begin{itemize}
	\item[(A)] some label of a node in $t$ contains $s$ as a subtree; or
	\item[(B)] for some node $v$ of $t$, the flattening of $t|_v$ is a proper subtree of $s$.
\end{itemize}
\end{lem}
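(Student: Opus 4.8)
The plan is to read off the root of $s$ using the explicit description of the nodes of a flattening given in Remark~\ref{rem:flat}, and then to split into two cases according to whether a single label already contains all of $s$, or whether some substitution happens strictly below the root of $s$. Concretely, I would write $s = \mu_{\set{a,b}}(t)|_x$ for a node $x$ of the flattening. By Remark~\ref{rem:flat}, $x$ is described by a finite branch $(v_1,\ldots,v_k)$ of $t$, a sequence of port-occurrences $(w_1,\ldots,w_{k-1})$, and — provided $v_k$ is not a port of $t$ — a non-port node $w_k$ of the label $s_k$ sitting at $v_k$. If $v_k$ is a port, then $x$ is itself a port and $s$ is a single-port leaf, a trivial one-node subtree that I would treat separately (or disregard, reading the statement for genuine subtrees); so I assume from now on that $w_k$ exists.

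The key observation, which I expect to follow directly from the recursive definition of flattening, is that the subtree of $\mu_{\set{a,b}}(t)$ rooted at $x$ is obtained from the label-subtree $s_k|_{w_k}$ by replacing, at each occurrence of a port $j$ inside $s_k|_{w_k}$, the flattening of $t|_v$, where $v$ is the $j$-th child of $v_k$ in $t$. Granting this, I would then split on whether $s_k|_{w_k}$ contains a port. If it contains none, then no substitution takes place below $x$, so $s$ equals $s_k|_{w_k}$, which is literally a subtree of the label $s_k$ of the node $v_k$ of $t$; this is exactly alternative (A). If on the other hand $s_k|_{w_k}$ does contain a port, I pick one such occurrence, of some port $j$. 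Since $w_k$ is a non-port node, this occurrence is a proper descendant of $w_k$, and hence the subtree it contributes to $s$ — namely the flattening of $t|_v$ for $v$ the $j$-th child of $v_k$ — sits strictly below the root $x$ of $s$. Thus the flattening of $t|_v$ is a proper subtree of $s$, which is alternative (B).

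The routine points to verify are that $v_k$ really has a $j$-th child whenever the port $j$ occurs in its label (this holds because the rank of $s_k$ equals the number of children of $v_k$ in $t$), and that the subtree substituted at a port occurrence is exactly the flattening of the corresponding immediate subterm $t|_v$; both are immediate from the definition of $\mu$. The non-port hypothesis on $w_k$ is precisely what upgrades the subtree in case (B) to a \emph{proper} one, and I would make sure to highlight this.

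The main, if modest, obstacle I anticipate is the bookkeeping around Remark~\ref{rem:flat}: one must correctly identify ``the subtree of the flattening rooted at $x$'' with the substituted copy of $s_k|_{w_k}$, including how descendants of $x$ split into those staying inside $s_k|_{w_k}$ and those passing through a port of $s_k$ into a child subterm of $t$. Once this identification is nailed down, the two-case dichotomy above is exhaustive and both alternatives fall out immediately.
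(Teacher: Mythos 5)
Your proof is correct and follows essentially the same route as the paper's: both use Remark~\ref{rem:flat} to locate the root of $s$ via the sequences $(v_1,\ldots,v_k)$ and $(w_1,\ldots,w_k)$, then split on whether the subtree of $s_k$ rooted at $w_k$ contains a port, obtaining (A) when it does not and (B), with properness coming from $w_k$ being a non-port node, when it does. The only difference is that you explicitly flag the degenerate case where $s$ is a single port leaf (i.e.\ $v_k$ is a port of $t$), which the paper's proof silently skips; this case never arises in the lemma's applications, where $s$ is always assumed to have no ports.
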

\begin{proof}
Easy from the definition of flattening. Looking at the formulation in Remark~\ref{rem:flat}, if the root of $s$ is determined by sequences
\[
	(v_1,v_2,\ldots,v_k) \qquad \text{and} \qquad (w_1,w_2,\ldots,w_k)
\]
and $s_k$ is the label of $v_k$, then case $(A)$ holds if the subtree of $s_k$ rooted at $w_k$ contains no ports, since then $s$ is identical to that subtree. On the other hand if that subtree contains a port, say of number $j$, then the flattening of the $j$'th subtree of $v_k$ in $t$ is a proper subtree of $s$ and case (B) holds.
\end{proof}

Terms in $\clone\{a,b\}$ cannot have any leaves other than ports, so the terms ``leaf'' and ``port'' can be used interchangeably for them. However, this does not apply to terms $t\in\clone(\clone\{a,b\})$, as such a term can have a leaf labeled with a tree in $\clone\{a,b\}$ of rank $0$. Reading the following discussion, one should keep this distinction in mind.

\begin{lem}\label{lem:has-a-leaf}
Every subtree of the flattening of $t$ has a port if and only if:
\begin{itemize}
\item every subtree of $t$ has a port, and
\item all nodes in $t$ are of kind {\bf 3} or {\bf 4}.
\end{itemize}
\end{lem}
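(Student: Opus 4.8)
The plan is to establish the two implications separately, in both directions analysing the flattening node by node via the bijective presentation of Remark~\ref{rem:flat}, which describes each node of the flattening $\mu_{\{a,b\}}(t)$ as a triple consisting of a finite branch $(v_1,\ldots,v_k)$ of $t$, a list of port occurrences $(w_1,\ldots,w_{k-1})$ realising the steps of that branch, and a final node $w_k$ inside the label $s_k$ of $v_k$. The single fact I would lean on throughout is that flattening is rank-preserving: $\mu_{\{a,b\}}(t|_v)$ has the same rank as $t|_v$, and for a term over $\{a,b\}$ positive rank is synonymous with having a port, so $\mu_{\{a,b\}}(t|_v)$ has a port exactly when $t|_v$ does. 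Equivalently, the ports of the flattening are precisely the images of the port-leaves of $t$; a port created inside some sub-flattening $\mu_{\{a,b\}}(t|_u)$ therefore remains a port once that sub-flattening is plugged into a larger subtree.

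For the implication from right to left I would assume that every subtree of $t$ has a port and that every node of $t$ (meaning every non-port node, the ones labelled by terms) is of kind {\bf 3} or {\bf 4}, and then show that an arbitrary subtree $\mu_{\{a,b\}}(t)|_x$ has a port. Writing $x$ as above, if $v_k$ is a port of $t$ then $x$ is itself a port and we are done. Otherwise $w_k$ is a real node of $s_k$; since $s_k$ is of kind {\bf 3} or {\bf 4}, all of its subtrees have ports, so $s_k|_{w_k}$ contains an occurrence of some port $j$. By the description of flattening this occurrence is filled, inside $\mu_{\{a,b\}}(t)|_x$, by $\mu_{\{a,b\}}(t|_u)$ where $u$ is the $j$-th child of $v_k$; as $t|_u$ is a subtree of $t$ it has a port, hence so does $\mu_{\{a,b\}}(t|_u)$, and that port survives as a port of $\mu_{\{a,b\}}(t)|_x$.

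For the converse I would argue by contraposition, exhibiting a portless subtree of the flattening whenever the right-hand side fails. If some subtree $t|_v$ has no port, then $v$ is not a port; choosing $x$ with a branch to $v$, arbitrary realising port occurrences, and $w_k$ equal to the root of $s_v$, the description of flattening makes $\mu_{\{a,b\}}(t)|_x$ equal to $\mu_{\{a,b\}}(t|_v)$, which has rank $0$ and hence no port. If instead some node of $t$ carries a label $s_v$ of kind {\bf 1} or {\bf 2}, then by definition $s_v$ has a portless subtree $s_v|_w$; choosing $x$ with the analogous branch data and $w_k=w$, case~(A) of Lemma~\ref{lem:flattening-subtree} gives $\mu_{\{a,b\}}(t)|_x=s_v|_w$, again portless. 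Either way the flattening has a subtree without a port, which is what contraposition requires.

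The steps I expect to cost the most care are two pieces of bookkeeping rather than any deep argument. First, I must check that the node $x$ built from a chosen branch of $t$ and chosen port occurrences is really a legal node of the flattening; this reduces to noting that each step from $v_i$ to its $j$-th child forces port $j$ to occur in $s_i$, because labels have rank equal to their number of children and every port up to the rank appears. Second, and more importantly, I must justify that a port created deep inside a sub-flattening is not destroyed by the surrounding substitutions. This is exactly where the rank-preservation observation of the first paragraph does the real work: without it, verifying that $\mu_{\{a,b\}}(t|_u)$ has a port would seem to require descending along a branch of port substitutions that may never terminate, whereas rank-preservation settles it in one line from the assumption on subtrees of $t$.
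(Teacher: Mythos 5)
Your proof is correct and takes essentially the same approach as the paper's: both rest on the structural analysis of subtrees of a flattening via Remark~\ref{rem:flat} and Lemma~\ref{lem:flattening-subtree}, the only difference being that you prove the right-to-left implication directly while the paper phrases it contrapositively (starting from a portless subtree of the flattening and applying the case (A)/(B) dichotomy). The paper's proof is simply a much terser rendering of the same argument, leaving the bookkeeping you spell out implicit.
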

\begin{proof}
If $t$ has a node of kind {\bf 1} or {\bf 2}, then the label of that node has a subtree with no ports, and that subtree persists in the flattening of $t$. If $t$ itself has a subtree with no ports, then the flattening of that subtree has no ports. 

Conversely, assume that the flattening of $t$ contains a subtree $s$ with no ports. In the case (A) from Lemma~\ref{lem:flattening-subtree}, $s$ was a subtree of the label of a node in $t$, and that node must have been of kind ${\bf 1}$ or ${\bf 2}$. In the case (B), there is a node $v$ of $t$ such that the flattening of $t|_v$ has no ports. This can happen only if $t|_v$ has no ports.
\end{proof}

\begin{lem}\label{lem:port-twice}
Assume that every subtree of the flattening of $t$ has a port. Some port name is used in the flattening of $t$ at least twice if and only if:
\begin{itemize}
\item some port occurs in $t$ at least twice, or
\item some node in $t$ is of kind {\bf 3}.
\end{itemize}
\end{lem}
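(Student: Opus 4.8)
The plan is to read off, via Remark~\ref{rem:flat}, exactly which nodes of the flattening are ports and with which numbers, and then reduce the statement to a counting argument. Recall first that by the standing assumption together with Lemma~\ref{lem:has-a-leaf}, every subtree of $t$ has a port and every node of $t$ is of kind {\bf 3} or {\bf 4}; in particular no node of $t$ carries a label of rank $0$ (such a label would have a port-free subtree, namely itself, and so be of kind {\bf 1} or {\bf 2}), so the only leaves of $t$ are its ports. By Remark~\ref{rem:flat}, a node of the flattening is a port precisely when its defining branch $(v_1,\ldots,v_k)$ in $t$ ends in a port-leaf $v_k$, and it then inherits the number of $v_k$; the remaining data is a sequence $(w_1,\ldots,w_{k-1})$ choosing, in each intermediate label $s_i$, an occurrence of the port through which the branch descends to $v_{i+1}$. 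Thus the port-nodes of the flattening correspond bijectively to such pairs (a branch to a port-leaf together with a compatible occurrence sequence), each carrying the port number of its terminal port-leaf, and the whole argument reduces to counting how many such pairs share a number.

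For the ``if'' direction I would treat the two disjuncts separately. If some port number $p$ labels two distinct port-leaves $v,v'$ of $t$, then each of them extends to at least one port-node of the flattening numbered $p$ (a compatible occurrence sequence always exists, since the descent-port occurs at least once in each intermediate label by the requirement that every used port appears); the two resulting nodes are distinct because their branches already differ, so $p$ is used at least twice. If instead some node $u$ of $t$ is of kind {\bf 3}, then its label has a port $j$ occurring at two distinct positions $w,w'$, and the two corresponding nodes of the flattening each root a copy of the flattening of the subtree $t|_{u'}$ hanging below the $j$-th child $u'$ of $u$. That flattening has a port: it is the subtree of the flattening of $t$ rooted at the $w$-node, hence has a port by the standing assumption (alternatively, apply Lemma~\ref{lem:has-a-leaf} to $t|_{u'}$). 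Picking such a port number $p$, both copies supply a port-node numbered $p$, and these are distinct since the occurrence sequences already differ at $u$.

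For the converse I would argue contrapositively: assume that every port number occurs exactly once in $t$ and that no node of $t$ is of kind {\bf 3}, so that every node is of kind {\bf 4}. In a kind-{\bf 4} label each port name occurs exactly once, hence in each pair from the correspondence above the occurrence sequence $(w_1,\ldots,w_{k-1})$ is forced by the branch alone. Consequently the port-nodes of the flattening biject with the port-leaves of $t$, preserving numbers; since each of $1,\ldots,n$ labels exactly one port-leaf of $t$, each labels exactly one port-node of the flattening, so no number is used twice. This completes the contrapositive.

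The delicate points I would be careful to spell out are twofold. First, that the correspondence of Remark~\ref{rem:flat} preserves port numbers, so that the number of a flattening-port equals that of the port-leaf it comes from; this is precisely what makes the counting meaningful. Second, the phenomenon driving the kind-{\bf 3} case, namely that a single port-leaf of $t$ can spawn several port-nodes of the flattening exactly when an intermediate label has a repeated port. The main obstacle is therefore not a new idea but bookkeeping: one must keep straight the two different notions of ``port'' in play, the ports of $t$ as a term in $\clone(\clone\set{a,b})$ versus the ports of the individual labels lying in $\clone\set{a,b}$, and track how the former are recovered from branches while the latter govern the multiplicity of occurrence sequences. Beyond Remark~\ref{rem:flat} and Lemma~\ref{lem:has-a-leaf} no further machinery appears to be needed.
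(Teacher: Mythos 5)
Your proof is correct and follows essentially the same route as the paper's: both invoke Lemma~\ref{lem:has-a-leaf} to reduce to the case where all nodes are of kind {\bf 3} or {\bf 4}, split the ``if'' direction into the same two cases (a repeated port of $t$ itself, resp.\ a kind-{\bf 3} label whose repeated port spawns two disjoint copies of the flattening of the subtree below the corresponding child, each contributing a port), and handle the converse contrapositively via kind-{\bf 4} labels. The only difference is one of detail: you make explicit, through the branch/occurrence-sequence correspondence of Remark~\ref{rem:flat}, the bookkeeping that the paper compresses into one-line claims such as ``it is easy to see.''
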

\begin{proof}
By Lemma~\ref{lem:has-a-leaf}, every subtree of $t$ has a port and all nodes in $t$ are of kind {\bf 3} or {\bf 4}.

If some port occurs in $t$ at least twice, then the port occurs at least twice in the flattening of $t$ too. On the other hand, if some node $v$ in $t$ is of kind {\bf 3}, then pick a port name $k$ that is used at least twice in the label of $v$. The subtree of $t$ rooted at the $k$-th child of $v$ has some
port $w$.  
That port appears at least twice in the flattening of $t$.

Conversely, if every port name in $t$ is used only once and every node in $t$ is of type {\bf 4}, then it is easy to see that every port name in the flattening of $t$ is also used only once.
\end{proof}

\begin{lem}\label{lem:antiregular}
Assume that $t$ has no ports. Then the flattening of $t$ is densely antiregular if and only if
\begin{enumerate}
	\item[(i)] no node in $t$ is of kind {\bf 1}; and
	\item[(ii)] every node in $t$ has a descendant $v$ such that:
	\begin{enumerate}
		\item the label of $v$ is of kind {\bf 2}; or
		\item all nodes in $t|_v$ are of kind {\bf 4} and the flattening of $t|_v$ is densely antiregular.
	\end{enumerate}
\end{enumerate}
\end{lem}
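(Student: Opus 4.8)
The plan is to prove both implications by tracking how the nodes of the flattening of $t$ are distributed among the nodes of $t$. Recall from Remark~\ref{rem:flat} that every node of the flattening ``belongs'' to a unique node $v$ of $t$, namely the last node $v_k$ of its branch, and corresponds to a non-port node of the label $s_v$. Two facts drive the whole argument. First, the non-port part of each label $s_v$ is copied verbatim into the flattening, so every portless subtree of a label occurs as a subtree of the flattening of $t$. Second, for every node $v$ of $t$ there is a node of the flattening, which I will call \emph{the top of $v$}, whose subtree is exactly the flattening of $t|_v$; since the paper requires every port of a term to occur at least once, such a top can always be placed below the top of any ancestor of $v$ by descending through ports. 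I will freely use that a subtree of an antiregular tree is again antiregular, hence densely antiregular.

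For the ``only if'' direction, assume the flattening of $t$ is densely antiregular. For (i), if some node $v$ were of kind {\bf 1}, its label would contain a portless subtree $s'$ that is not densely antiregular; this $s'$ sits verbatim in the flattening, and a node of $s'$ witnessing its failure of dense antiregularity would witness the same failure in the flattening, a contradiction. For (ii), fix any node $u$ and apply density at the top of $u$: there is an antiregular subtree $s$ rooted below it. I then invoke Lemma~\ref{lem:flattening-subtree}. In case (A), $s$ equals a portless subtree of some label $s_v$ with $v$ a descendant of $u$; since $v$ is not of kind {\bf 1} by (i), having a portless subtree forces it to be of kind {\bf 2}, giving (ii)(a). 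In case (B), the flattening of some $t|_{v'}$, with $v'$ a descendant of $u$, is a subtree of $s$ and hence itself antiregular; the key observation is that $t|_{v'}$ can then contain no node of kind {\bf 3}, because a kind {\bf 3} label has a port name occurring twice and flattening would place two identical copies of a child-flattening at two distinct nodes, destroying antiregularity. Thus every node of $t|_{v'}$ is of kind {\bf 2} or {\bf 4}: if all are of kind {\bf 4} we obtain (ii)(b), and otherwise a kind {\bf 2} node, being a descendant of $u$, yields (ii)(a).

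For the ``if'' direction, assume (i) and (ii) and fix an arbitrary node $x$ of the flattening, belonging to a node $v$ of $t$ at a label-node $w$ of $s_v$. If the subtree of $s_v$ below $w$ has no ports, then the subtree of the flattening at $x$ is exactly that portless subtree; since $v$ is not of kind {\bf 1} by (i) but has a portless subtree, it is of kind {\bf 2}, so every portless subtree of $s_v$, and in particular the one at $x$, is densely antiregular, and $x$ has an antiregular descendant. Otherwise there is a port below $w$, leading to a child $c$ of $v$ whose top lies below $x$; applying (ii) to $c$ produces a descendant $v'$ that is of kind {\bf 2} or has all nodes of kind {\bf 4} with flattening densely antiregular, and by descending through the always-present ports I place a copy of the top of $v'$ below $x$. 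It remains to check that the top of such a $v'$ has an antiregular descendant: in the kind {\bf 2} case a densely antiregular portless subtree of $s_{v'}$ occurs verbatim below the top of $v'$, and in the kind {\bf 4} case the subtree at the top of $v'$ is the flattening of $t|_{v'}$ itself, densely antiregular by hypothesis.

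The main obstacle I expect is the bookkeeping that connects the external tree $t$ to its flattening: making precise the notion of the top of $v$ together with the verbatim copying of portless label-parts, and cleanly isolating the two ways an antiregular subtree can arise via Lemma~\ref{lem:flattening-subtree}. Within that, the sharpest step is the kind {\bf 3} argument in the ``only if'' direction, namely recognising that a doubly-occurring port forces two equal subtrees and hence contradicts antiregularity, since this is exactly what excludes kind {\bf 3} from the all-kind {\bf 4} alternative and funnels the remaining possibilities into precisely (ii)(a) and (ii)(b).
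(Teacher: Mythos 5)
Your proof is correct, and it rests on the same ingredients as the paper's: the dichotomy of Lemma~\ref{lem:flattening-subtree}, the embedding of the flattening of $t|_v$ as a subtree of the flattening of $t$ (your ``top of $v$'', available because every port must occur at least once in a term), the verbatim persistence of portless label subtrees, and the fact that a repeated port name creates two identical subtrees in the flattening. Your ``if'' direction is essentially the paper's, only phrased node-by-node via Remark~\ref{rem:flat} rather than subtree-by-subtree: your case split on whether there is a port below $w$ is exactly the proof of Lemma~\ref{lem:flattening-subtree}, inlined. Where you genuinely diverge is the ``only if'' direction for condition (ii). The paper fixes $u$ and runs a structural trichotomy on $t|_u$: either a kind-\textbf{2} descendant exists, or some descendant $v$ has $t|_v$ all of kind \textbf{4}, or else kind-\textbf{3} nodes are dense in $t|_u$, and this last case is refuted by arguing that then every subtree of the flattening of $t|_u$ contains two identical subtrees. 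You instead argue forward: density supplies an antiregular subtree $s$ below the top of $u$, Lemma~\ref{lem:flattening-subtree} splits on how $s$ arises, and kind \textbf{3} is excluded locally, inside the single subterm $t|_{v'}$ whose flattening is already known to be antiregular. Your organization buys a more local and arguably cleaner use of the duplication argument (one kind-\textbf{3} node already destroys antiregularity of that one subterm's flattening), and it sidesteps the paper's terse global claim that dense kind-\textbf{3} nodes put duplicates into every subtree of the flattening --- a step that itself needs some unpacking. What the paper's trichotomy buys in exchange is that the converse direction becomes a pure case analysis on the structure of $t$, with no second appeal to the dichotomy lemma.
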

\begin{proof}
Let us first show the ``if'' part. Assume that $t \in \clone (\clone \{a,b\})$ has no ports and satisfies conditions (i) and (ii). We need to show that every subtree $s$ of the flattening of $t$ has a subtree that is antiregular. Notice that the flattening of $t$ has no ports; as a consequence, $s$ has no ports either.

Consider first the case (A) from Lemma~\ref{lem:flattening-subtree}, i.e.~that $s$ is a subtree of some label $r \in \clone\{a,b\}$ of a node in $t$. In particular the label $r$ has a subtree without ports, and therefore $r$ must be of kind {\bf 1} or {\bf 2}. Assumption (i) rules out kind {\bf 1}, and therefore $r$ has kind {\bf 2}, which implies that $s$ is densely antiregular, and thus has an antiregular subtree.  We are left with case (B), where some subtree of $s$ is equal to the flattening of $t|_w$ for some node $w$ in $t$. By assumption (ii), $w$ has a descendant $v$ in $t$ that satisfies one of (a) or (b) as in the statement of the lemma. In either case the flattening of $t|_v$ (and hence also $s$) contains an antiregular subtree.

For the ``only if'' part, suppose that the flattening of $t \in \clone (\clone \{a,b\})$ is densely antiregular. Clearly no node in $t$ can have kind {\bf 1}, since every subtree of a densely antiregular tree is also densely antiregular. It remains to prove condition (ii). Take some node $u$ in $t$. If $u$ has a descendant with label of kind {\bf 2} then (ii) is satisfied. Suppose otherwise. Assume first that $u$ has some descendant $v$ such that the subtree $t|_v$ uses only labels of kind {\bf 4}. Since the flattening of $t|_v$ is a subtree of the flattening of $t$, it must be densely antiregular, and therefore (ii) holds thanks to (b). The remaining case is when no such $v$ exists, hence nodes of kind {\bf 3} are dense in the subtree $t|_u$. But then every subtree of the flattening of $t|_u$ contains two identical subtrees and is therefore not antiregular, which contradicts the assumption that the flattening of $t$ is densely antiregular.
\end{proof}

\begin{cor}\label{cor:classification}
For any term $t\in\clone(\clone\{a,b\})$, the following properties hold:
\begin{itemize}
\item[(a)] If some node of $t$ is of kind {\bf 1} then the flattening of $t$ is of kind {\bf 1}.
\item[(b)] If some subtree of $t$ has no 
ports and does not satisfy condition (ii) in Lemma~\ref{lem:antiregular}, then the flattening of $t$ is of kind {\bf 1}.
\item[(c)] If:
\begin{itemize}
\item no node of $t$ is of kind {\bf 1},
\item some subtree of $t$ has no
ports, and
\item every subtree of $t$ with no
ports satisfies condition (ii) in Lemma~\ref{lem:antiregular},
\end{itemize}
then the flattening of $t$ is of kind {\bf 2}.
\item[(d)] If:
\begin{itemize}
\item no node of $t$ is of kind {\bf 1},
\item every subtree of $t$ has
ports, and
\item some node of $t$ is of kind {\bf 2},
\end{itemize}
then the flattening of $t$ is of kind {\bf 2}.
\item[(e)] If:
\begin{itemize}
\item all nodes of $t$ are of kind {\bf 3} or {\bf 4},
\item every subtree of $t$ has ports, and
\item some node of $t$ is of kind {\bf 3},
\end{itemize}
then the flattening of $t$ is of kind {\bf 3}.
\item[(f)] If:
\begin{itemize}
\item all nodes of $t$ are of kind {\bf 4},
\item every subtree of $t$ has ports, and
\item some port name appears in $t$ more than once,
\end{itemize}
then the flattening of $t$ is of kind {\bf 3}.
\item[(g)] If:
\begin{itemize}
\item all nodes of $t$ are of kind {\bf 4},
\item every subtree of $t$ has ports, and
\item every port name appears in $t$ exactly once,
\end{itemize}
then the flattening of $t$ is of kind {\bf 4}.
\end{itemize}
Moreover, cases (a)-(g) cover all terms $t$ in $\clone(\clone\{a,b\})$. 
\end{cor}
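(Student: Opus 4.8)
The plan is to read off the kind of the flattening of $t$ from Lemmas~\ref{lem:has-a-leaf}, \ref{lem:port-twice} and~\ref{lem:antiregular} by a two-level case analysis. The coarse split is between kinds \textbf{1}, \textbf{2} (the flattening has a port-free subtree) and kinds \textbf{3}, \textbf{4} (every subtree of the flattening has a port). By Lemma~\ref{lem:has-a-leaf} the latter occurs exactly when every subtree of $t$ has a port and all nodes of $t$ are of kind \textbf{3} or \textbf{4} --- precisely the shared hypotheses of (e)--(g). There I would invoke Lemma~\ref{lem:port-twice}, which says a port repeats in the flattening iff some node of $t$ is of kind \textbf{3} or some port name already repeats in $t$: this gives kind \textbf{3} in (e) (a kind-\textbf{3} node) and (f) (a repeated port), and kind \textbf{4} in (g) (neither). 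These three cases are the easy, purely combinatorial part.

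The harder half is (a)--(d), where the flattening lands in kinds \textbf{1} or \textbf{2}: each hypothesis exhibits either a port-free subtree of $t$ or a node of kind \textbf{1} or \textbf{2}, so by Lemma~\ref{lem:has-a-leaf} the flattening indeed has a port-free subtree, and the task is to decide whether some port-free subtree of the flattening fails to be densely antiregular (kind \textbf{1}) or none does (kind \textbf{2}). Two ingredients drive this. First, for every node $w$ of $t$ the flattening of $t|_w$ reappears as a subtree of the flattening of $t$ (immediate from the definition of flattening, cf.\ Remark~\ref{rem:flat}). Second, Lemma~\ref{lem:antiregular} decides, for port-free $t|_w$, whether the flattening of $t|_w$ is densely antiregular. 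For (a) a kind-\textbf{1} label already carries a non-densely-antiregular port-free subtree, which persists in the flattening; for (b) a port-free $t|_w$ violating condition (ii) has, by Lemma~\ref{lem:antiregular}, a non-densely-antiregular flattening, which by the first ingredient sits inside the flattening of $t$ --- both give kind \textbf{1}. For (c) and (d) I would instead check that every port-free subtree $s$ of the flattening is densely antiregular, i.e.\ that every subtree $s'$ of $s$ contains an antiregular subtree. Feeding $s'$ to Lemma~\ref{lem:flattening-subtree}: in case (A) the subtree $s'$ lies inside a label, necessarily of kind \textbf{2} since there are no kind-\textbf{1} nodes, hence densely antiregular; in case (B) a proper subtree of $s'$ is the flattening of a port-free $t|_v$, which the hypotheses of (c) and Lemma~\ref{lem:antiregular} make densely antiregular --- and under (d) this case (B) cannot occur at all, since every subtree of $t$ has a port.

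For exhaustiveness I would organise the terms as a decision tree. First ask whether some node of $t$ is of kind \textbf{1} (case (a)). If not, ask whether some subtree of $t$ is port-free; if so, split on whether every port-free subtree satisfies condition (ii) of Lemma~\ref{lem:antiregular} (cases (b) and (c)). If instead every subtree of $t$ has a port, split on the presence of a kind-\textbf{2} node (case (d)); among the remaining kind-\textbf{3}/\textbf{4}-only terms split on the presence of a kind-\textbf{3} node (case (e)); and among kind-\textbf{4}-only terms split on whether some port name repeats (cases (f) and (g)). Every term falls down exactly one branch, landing in a case whose hypotheses it satisfies.

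I expect the main obstacle to be the kind \textbf{1}/\textbf{2} separation in (b)--(d), rather than the port bookkeeping of (e)--(g). It rests on the observation that flattenings of subterms $t|_w$ resurface as subtrees of the global flattening, together with an accurate account --- supplied by Lemma~\ref{lem:flattening-subtree} --- of which subtrees of the flattening can be port-free. The subtle point is case (B) of that lemma in the argument for (c), where a port-free subtree of the flattening straddles a node label and the flattenings plugged into its ports; there I must apply Lemma~\ref{lem:antiregular} to exactly the right port-free subterm $t|_v$ to conclude dense antiregularity.
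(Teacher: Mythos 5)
Your proposal is correct and takes essentially the same route as the paper's own proof: cases (e)--(g) via Lemma~\ref{lem:has-a-leaf} followed by Lemma~\ref{lem:port-twice}; cases (a)--(b) via persistence in the flattening of port-free label subtrees and of flattenings of port-free subterms, combined with Lemma~\ref{lem:antiregular}; cases (c)--(d) via the case split of Lemma~\ref{lem:flattening-subtree} (with case (B) ruled out in (d)); and exhaustiveness by the same decision tree the paper itself displays after the corollary. No gaps worth noting.
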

\begin{proof}
The final remark is easy to check. 
\begin{itemize}
\item[(a)] If the label of a node $v$ in $t$ is of kind {\bf 1} then it contains a subtree which has no ports and it not densely antiregular. That subtree appears in the flattening of $t$, therefore the flattening of $t$ is of kind {\bf 1}.
\item[(b)] Let $v$ be a node in $t$ such that $t|_v$ has no
ports and does not satisfy condition (ii) in Lemma~\ref{lem:antiregular}. Then the flattening of $t|_v$ has no ports and, by Lemma~\ref{lem:antiregular}, it is not densely antiregular. As a result, the flattening of $t$ is of kind {\bf 1}.
\item[(c)] Let $v$ be a node of $t$ such that $t|_v$ has no
ports. Such a node exists by our second assumption. Clearly, the flattening of $t|_v$ has no ports.

Now consider any subtree $s$ of the flattening of $t$ that has no ports. We need to prove that $s$ is densely antiregular, i.e., that every subtree $s'$ of $s$ contains an antiregular subtree. Apply Lemma~\ref{lem:flattening-subtree} to $s'$. In case (A), $s'$ is a subtree of (the label of) a node $v$ of $t$. Since $t$ has no nodes of kind {\bf 1}, $v$ must be of kind {\bf 2}, hence $s'$ is densely antiregular. In case (B), there is a node $v$ in $t$ such that the flattening of $t|_v$ is a subtree of $s'$. Since $s'$ has no ports, $t|_v$ cannot have ports either so, by our assumptions, $t|_v$ satisfies condition (ii) in Lemma~\ref{lem:antiregular}. Since $t$ has no nodes of kind {\bf 1} condition (i) is also satisfied, hence by Lemma~\ref{lem:antiregular} the flattening of $t|_v$ is densely antiregular so it contains an antiregular subtree as required.

As a result, the flattening of $t$ is of kind {\bf 2}.
\item[(d)] If some node $v$ of $t$ is of kind {\bf 2}, then its label has a subtree with no ports, and that subtree persists in the flattening of $t$.

Now consider any subtree $s$ of the flattening of $t$ such that $s$ has no ports. We need to prove that $s$ is densely antiregular. Apply Lemma~\ref{lem:flattening-subtree} to $s$. In case (A), $s$ is a subtree of a node $v$ of $t$. Since $t$ has no nodes of kind {\bf 1}, $v$ must be of kind {\bf 2}, hence $s$ is densely antiregular. In case (B), there must be a node $v$ in $t$ such that the flattening of $t|_v$ is a subtree of $s$. But by our assumptions $t|_v$ has ports, so the flattening of $t|_v$ has ports, which is a contradiction. As a result, case (B) cannot happen, $s$ is densely antiregular and the flattening of $t$ is of kind {\bf 2}.
\item[(e)] By the first two assumptions and by Lemma~\ref{lem:has-a-leaf}, every subtree of the flattening of $t$ has a port. Furthermore, since $t$ has a node of kind {\bf 3}, by Lemma~\ref{lem:port-twice} the flattening of $t$ is of kind {\bf 3}.
\item[(f)] Proved similarly to (e).
\item[(g)] By the first two assumptions and by Lemma~\ref{lem:has-a-leaf}, every subtree of the flattening of $t$ has a port. Since $t$ has no node of kind {\bf 3} and no port is used more than once in $t$, by Lemma~\ref{lem:port-twice} the flattening of $t$ is of kind {\bf 4}. \qedhere
\end{itemize}
\end{proof}

\noindent The rather tiresome Corollary~\ref{cor:classification} shows how, given a tree $t\in\clone(\clone\{a,b\})$, one can determine the kind of the flattening of $t$ just by looking at the external structure of $t$ and at the kinds of (the labels of) its nodes. More precisely, it is enough to look at the tree $\clone h(t)\in\clone A$. The procedure can be described by a decision diagram:

\[\xymatrix{
& *+[F]{\txt{Some node \\ is of kind {\bf 1}?}}\ar`l[ld]_-{\text{yes}}[ldddd]_>>{\text{(a)}}\ar[d]_{\text{no}} \\
& *+[F]{\txt{Every subtree \\ has a port?}}\ar[r]^-{\text{yes}}\ar[d]_-{\text{no}} & *+[F]{\txt{Some node \\is of kind {\bf 2}?}}\ar@(lu,r)[lddd]_(.35){\text{yes}}^>>{\text{(d)}}\ar[d]^{\text{no}} \\
& *+[F]{\txt{Every subtree \\ with no ports \\ satisfies condition (ii) \\ in Lemma~\ref{lem:antiregular}?}}\ar[ldd]_(.4){\text{no}}^>>{\text{(b)}}\ar[dd]^(.4){\text{yes}}_>>{\text{(c)}}  & *+[F]{\txt{Some node \\is of kind {\bf 3}?}}\ar[d]^{\text{no}}  \\
& & *+[F]{\txt{Some port name appears \\ more than once?}}\ar[d]^-{\text{no}}_>>{\text{(g)}} \\
 *+[o][F]{\bf 1} &  *+[o][F]{\bf 2} \save[]+<2.3cm,0cm>*+[o][F]{\bf 3}="kind3"\restore &  *+[o][F]{\bf 4} 
 \ar@/_3pc/[uu];"kind3"^(.4){\text{yes}}_>>{\text{(e)}} 
 \ar[u];"kind3"^{\text{yes}}^>>{\text{(f)}}
}\]

Final decisions in the diagram are labeled with respective subcases of Corollary~\ref{cor:classification} that justify them.

Note that the only path that leads to outcome {\bf 4} guarantees that all nodes (apart from ports) in $t$ are of kind {\bf 4}. In this case, the tree $\clone h(t)$ retains full information about all nodes. This means that $\clone h(t)$ contains enough information to determine not just the {\em kind} of the flattening of $t$, but also the value of $h$ on that flattened tree. Formally, this defines a function, which we denote $\mult^A:\clone A\to A$, such that the following diagram commutes:
\[\xymatrix{
\clone(\clone\{a,b\})\ar[r]^-{\clone h}\ar[d]_{\mu_{\{a,b\}}} & \clone A\ar[d]^{\mult^A} \\
\clone\{a,b\}\ar[r]_h & A.
}\]
This almost means that $h$ is a homomorphism; the only thing that remains to be proved is that $\mult^A$ is an $\omega$-clone on $A$. This would be tedious to check by hand, but fortunately it follows from abstract considerations: for every commuting diagram of functions
\[\xymatrix{
\clone B\ar[r]^{\clone h}\ar[d]_f & \clone A\ar[d]^g \\
B\ar[r]_h & A,
}\]
if $f$ is an $\omega$-clone on $B$ and $h$ is surjective then $g$ is an $\omega$-clone on $A$. This holds for Eilenberg-Moore algebras for any monad in place of $\clone$, and it is a folklore result proved in passing by several authors, see e.g~\cite[p. 152]{maclane} or~\cite[p. 96]{ttt}. A more explicit statement and proof can be found in~\cite[Lem.~3.3]{DBLP:journals/corr/Bojanczyk15}.

Finally, it is easy to see that the $\omega$-clone $A$ is finitely generated by the set of all its elements of rank at most $2$.

This completes the proof that the language of densely antiregular trees, although not regular, is recognized by a rank-wise finite, finitely generated $\omega$-clone.

\bibliographystyle{alpha}
\bibliography{bib}

\end{document}